\newtheorem{observation}{Remark}[section]
\newtheorem{definition}[observation]{Definition}
\newtheorem{example}[observation]{Example}
\newtheorem{proposition}[observation]{Proposition}
\title{Jacobians and Gradients \\for Cartesian Differential Categories}
\author{Jean-Simon Pacaud Lemay\thanks{The author would like to thank Jonathan Gallagher for useful discussions. The author is financially supported by a Natural Science and Engineering Research Council of Canada (NSERC) Postdoctoral Fellowship (PDF) - Award \#: 456414649}
\institute{Mathematics and Computer Science Department\\ Mount Allison University\\ Sackville, New Brunswick, Canada}
\email{jsplemay@gmail.com} \footnote{Author's website: \url{https://sites.google.com/view/jspl-personal-webpage/home}} }
\begin{document}
\maketitle

\begin{abstract} Cartesian differential categories come equipped with a differential combinator that formalizes the directional derivative from multivariable calculus. Cartesian differential categories provide a categorical semantics of the differential $\lambda$-calculus and have also found applications in causal computation, incremental computation, game theory, differentiable programming, and machine learning. There has recently been a desire to provide a (coordinate-free) characterization of Jacobians and gradients in Cartesian differential categories. One's first attempt might be to consider Cartesian differential categories which are Cartesian closed, such as models of the differential $\lambda$-calculus, and then take the curry of the derivative. Unfortunately, this approach excludes numerous important examples of Cartesian differential categories such as the category of real smooth functions. In this paper, we introduce linearly closed Cartesian differential categories, which are Cartesian differential categories that have an internal hom of linear maps, a bilinear evaluation map, and the ability to curry maps which are linear in their second argument. As such, the Jacobian of a map is defined as the curry of its derivative. Many well-known examples of Cartesian differential categories are linearly closed, such as, in particular, the category of real smooth functions. We also explain how a Cartesian closed differential category is linearly closed if and only if a certain linear idempotent on the internal hom splits. To define the gradient of a map, one must be able to define the transpose of the Jacobian, which can be done in a Cartesian reverse differential category. Thus, we define the gradient of a map to be the curry of its reverse derivative and show this equals the transpose of its Jacobian. We also explain how a linearly closed Cartesian reverse differential category is precisely a linearly closed Cartesian differential category with an appropriate notion of transpose. 
\end{abstract}

\section{Introduction}

Cartesian differential categories, introduced by Blute, Cockett, and Seely in \cite{blute2009cartesian}, come equipped with a differential combinator $\mathsf{D}$ which provides a categorical axiomatization of the directional derivative from multivariable calculus, and so for every map $A \xrightarrow{f} B$ produces its derivative $A \times A \xrightarrow{\mathsf{D}[f]} B$. There is no shortage of examples of Cartesian differential categories in the literature, but arguably the most important example is $\mathsf{SMOOTH}$, the category of Euclidean spaces $\mathbb{R}^n$ and real smooth functions between them, where the differential combinator, in this case, is precisely the classical directional derivative. Another important class of examples are the Cartesian closed differential categories \cite{Cockett-2019,bucciarelli2010categorical,manzonetto2012}, which provide a categorical semantics of Ehrhard and Regnier's differential $\lambda$-calculus \cite{ehrhard2003differential}. Cartesian (closed) differential categories have found numerous applications in computer science such as being picked up by Katsumata and Sprunger in their work on causal computations \cite{sprunger2019differentiable}, by Abadi and Plotkin in their work on differentiable programming languages \cite{abadi2019simple}, by Alvarez-Picallo and Ong in their work on incremental computation \cite{alvarez2019change},  and by Laird, Manzonetto, and McCusker in their work in game theory \cite{laird2013constructing}. Most recently, Cartesian (closed) differential categories have also found usage in machine learning with the introduction of Cartesian \emph{reverse} differential categories \cite{cockettetal:LIPIcs:2020:11661,cruttwell2020categorical}, which have been shown to be a suitable setting for reverse gradient descent by Cruttwell, Gavranovi{\'c}, Ghani, Wilson, and Zanasi in \cite{cruttwell2021categorical, wilson2021reverse}. 

An important concept in differential calculus is the Jacobian matrix. Recall that for a smooth function $\mathbb{R}^n \xrightarrow{F = \langle f_1, \hdots, f_n \rangle} \mathbb{R}^m$, its Jacobian matrix at point $\vec x \in \mathbb{R}^n$ is given by the $n \times m$ matrix $\mathbf{J}(F)(\vec x)$ whose coordinates are the partial derivatives of $f_i$ evaluated at $\vec x$. The $\mathbb{R}$-linear function associated with the Jacobian matrix at $\vec x$ is called the total derivative of $F$, and evaluating this linear function at $\vec y$ results in the derivative $\mathsf{D}[F](\vec x, \vec y)$. Thus, the Jacobian of $F$ can be interpreted as a map $\mathbb{R}^n \xrightarrow{\mathbf{J}(F)} \mathsf{LIN}(\mathbb{R}^n, \mathbb{R}^m)$, where the codomain is the vector spaces of $\mathbb{R}$-linear functions from $\mathbb{R}^n$ to $\mathbb{R}^m$, and can therefore be understood as sort of curry $\mathbb{R}^n \times \mathbb{R}^n \xrightarrow{\mathsf{D}[F]} \mathbb{R}^m$. Interpreting the Jacobian in this manner shows how the Jacobian is a special case of both the Fr{\'e}chet derivative and the Gateaux derivative. For a smooth function $\mathbb{R}^n \xrightarrow{f} \mathbb{R}$, its gradient at $\vec x \in \mathbb{R}^n$ is the transpose of its Jacobian at $\vec x$, $\nabla(f)(\vec x) = \mathbf{J}(F)(\vec x)^\mathsf{T} \in \mathbb{R}^n$. Since $\mathsf{LIN}(\mathbb{R}, \mathbb{R}^n) \cong \mathbb{R}^n$, the gradient can therefore be interpreted as a map $\mathbb{R}^n \xrightarrow{\nabla(f)} \mathsf{LIN}(\mathbb{R}, \mathbb{R}^n)$. 

The notions of Jacobians and gradients have yet to be formally defined in a Cartesian differential category. Such concepts would be highly desirable, specifically if one wishes to formalize machine learning algorithms in Cartesian differential categories. This need for Jacobians and gradients is expressed by Katsumata and Sprunger, who state the following in the conclusion of their paper \cite{sprunger2019differentiable}: ``Though we would like to say our abstract treatment of differentiation can be used directly by machine learning practitioners, it appears this is not the case yet. The derivative of a morphism in a Cartesian differential category is not the same as having an explicit Jacobian or gradient. A gradient can be recovered from this morphism by applying it to all the basis vectors, but when there are millions of parameters in a machine learning model, this idea is computationally disastrous. We think that by adding some structure to Cartesian differential categories, such as a designated closed subcategory, we could give a theoretical treatment allowing for more explicit representation of Jacobians.'' This highlights a clear need for axiomatizing (coordinate-free) Jacobians and gradients in the context of Cartesian differential categories. 

Based on the above discussion, and as suggested by Katsumata and Sprunger, one's first attempt might be to consider defining Jacobians in a Cartesian closed differential category and define the Jacobian of a map $A \xrightarrow{f} B$ as the curry of its derivative $A \times A \xrightarrow{\mathsf{D}[f]} B$, so $A \xrightarrow{\lambda\left( \mathsf{D}[f] \right)} [A,B]$. While this is a very reasonable and promising idea that enables one to one can get quite far in extracting the main properties of the Jacobian, there is a flaw. The Cartesian closed differential category approach would exclude numerous examples of Cartesian differential categories, specifically those wfith a ``finite-dimensional flavour'' and many machine learning-related models. In particular, $\mathsf{SMOOTH}$ is not Cartesian closed since, intuitively, the set of smooth functions from $\mathbb{R}^n$ to $\mathbb{R}^m$ is not a finite-dimensional $\mathbb{R}$-vector space and therefore not isomorphic to a Euclidean space $\mathbb{R}^k$. Thus another approach is required. Luckily, the codomain of the Jacobian is $\mathsf{LIN}(\mathbb{R}^n, \mathbb{R}^m)$, which is isomorphic to $\mathbb{R}^{nm}$, and so can therefore be interpreted as an object in $\mathsf{SMOOTH}$. We may generalize this idea in numerous Cartesian differential categories since there is a notion of linearity based on the differential combinator (which often coincides with the notion of linearity from linear algebra). Thus, one needs the notion of internal \emph{linear} homs. 

In this paper, we introduce the notion of a \textbf{linearly closed Cartesian differential category}, which is a Cartesian differential category with internal linear hom $\mathcal{L}(A,B)$ (which can be interpreted as an object which represents the set of linear maps from $A$ to $B$), a bilinear evaluation map $\mathcal{L}(A,B) \times A \xrightarrow{\epsilon_\ell} B$, and the ability to curry maps which are linear in their second argument. Many analogues of the basic Cartesian closed properties hold for the linearly closed setting. One of the axioms of the differential combinator states precisely that for any map $A \xrightarrow{f} B$, its derivative $A \times A \xrightarrow{\mathsf{D}[f]} B$ is linear in its second argument. Therefore, the Jacobian of $f$ is defined as the curry of $\mathsf{D}[f]$, $A \xrightarrow{\mathbf{J}(f) : = \lambda_\ell( \mathsf{D}[f] )} \mathcal{L}(A,B)$, and numerous of the basic properties of the Jacobian from classical calculus hold. Many important examples of Cartesian differential categories are linearly closed, such as any differential Lawvere theory like $\mathsf{SMOOTH}$ or the Lawvere theory of polynomials over a commutative semiring, and also the coKleisli category of a differential category which is symmetric monoidal closed \cite{blute2006differential}. We also explain how a Cartesian closed differential category is linearly closed if and only if a certain linear idempotent on the internal hom splits. Therefore, a certain idempotent completion of a Cartesian closed differential category results in a linearly closed Cartesian differential category. Lastly, to define the gradient of a map, one requires the ability to take the transpose (or dagger) of linear maps. This can be achieved in a linearly closed Cartesian reverse differential category, which we explain is precisely a linear closed Cartesian differential category equipped with linear transpose maps $\mathcal{L}(A,B) \xrightarrow{\tau} \mathcal{L}(B,A)$. Therefore, the gradient of a map $A \xrightarrow{f} B$ is defined by taking the curry of its reverse derivative $A \times B \xrightarrow{\mathsf{R}[f]} A$, which is linear in its second argument, $A \xrightarrow{\nabla(f) : = \lambda_\ell( \mathsf{R}[f] )} \mathcal{L}(B,A)$, which we show is equal to the transpose of its Jacobian, so $\nabla(f) = \tau \circ \mathbf{J}(f)$. In future works, the notions of Jacobians and gradient will be particularly useful when generalizing and applying automatic differentiation and machine learning algorithms, such as back-propagation or (reverse) gradient descent, in the setting of a Cartesian (reverse) differential category. In fact, internal linear homs are a key concept in V\'ak\'ar's recent work on automatic differentiation \cite{vakar2021chad}.

\section{Background: Cartesian Differential Categories and Linear Maps} 

In this background section, we review Cartesian (closed) differential categories and linear maps, as well as providing examples and a term logic notation that will help with intuition. For a more in-depth introduction to Cartesian differential categories, we refer the reader to \cite{blute2009cartesian, cockett2020linearizing}.

The underlying structure of a Cartesian differential category is that of a Cartesian left additive category, which in particular allows one to have zero maps and sums of maps, while also allowing for maps that do not preserve said sums or zeros. Maps that do preserve the additive structure are called \emph{additive} maps. It is important to note that we do not assume that our Cartesian left additive categories necessarily have negatives, which allows for examples from computer science. For a category with (chosen) finite products we denote the (chosen) binary product as $\times$, with projection maps $A \times B \xrightarrow{\pi_0} A$ and ${A \times B \xrightarrow{\pi_1} B}$ and pairing object $\langle -, - \rangle$, and the (chosen) terminal object as $\top$. 

\begin{definition}\label{CLACdef} A \textbf{left additive category} \cite[Definition 1.1.1]{blute2009cartesian} is a category $\mathbb{X}$ such that each hom-set $\mathbb{X}(A,B)$ is a commutative monoid with addition $\mathbb{X}(A,B) \times \mathbb{X}(A,B) \xrightarrow{+} \mathbb{X}(A,B)$, $(f,g) \mapsto f +g$, and zero $0 \in \mathbb{X}(A,B)$, and such that pre-composition preserves the additive structure: $(f+g) \circ a = f \circ a + g \circ a$ and $0 \circ a = 0$. A map $A \xrightarrow{f} B$ is said to be \textbf{additive} \cite[Definition 1.1.1]{blute2009cartesian} if post-composition by $f$ preserves the additive structure: $f \circ (a + b) = f \circ a + g \circ b$ and $f \circ 0 = 0$. A \textbf{Cartesian left additive category} \cite[Definition 1.2.1]{blute2009cartesian} is a left additive category $\mathbb{X}$ which has finite products and such that all the projection maps $A \times B \xrightarrow{\pi_0} A$ and ${A \times B \xrightarrow{\pi_1} B}$ are additive.\footnote{We note that the definition of a Cartesian left additive category presented here is not precisely that given in \cite[Definition 1.2.1]{blute2009cartesian}, but it is indeed equivalent, as explained in \cite{cockett2020linearizing}.}\end{definition}   

Cartesian differential categories are Cartesian left additive categories that come equipped with a differential combinator, whose axioms capture the basic properties of the directional derivative from multivariable differential calculus. In the following definition, it is important to note that we follow the convention used in more recent work on Cartesian differential categories which flips the convention used in \cite{blute2009cartesian}, so that the linear argument of the derivative is in the second argument rather than in the first. 

\begin{definition}\label{cartdiffdef} A \textbf{Cartesian differential category} \cite[Definition 2.1.1]{blute2009cartesian} is a Cartesian left additive category $\mathbb{X}$ equipped with a \textbf{differential combinator} $\mathsf{D}$, which is a family of operators $\mathbb{X}(A,B) \xrightarrow{\mathsf{D}} \mathbb{X}(A \times A,B)$, where for a map $A \xrightarrow{f} B$, the resulting map $A \times A \xrightarrow{\mathsf{D}[f]} B$ is called the derivative of $f$, and such that:  
\begin{enumerate}[{\bf [CD.1]}]
\item \label{CDCax1} $\mathsf{D}[f+g] = \mathsf{D}[f] + \mathsf{D}[g]$ and $\mathsf{D}[0] = 0$ 
\item \label{CDCax2} ${\mathsf{D}[f] \! \circ \! \langle a, b +c \rangle \!=\! \mathsf{D}[f] \!\circ\! \langle a, b \rangle \!+\! \mathsf{D}[f] \!\circ\! \langle a, c \rangle}$ and $\mathsf{D}[f] \circ \langle a, 0 \rangle = 0$
\item \label{CDCax3} $\mathsf{D}[1_A]=\pi_1$ and $\mathsf{D}[\pi_j] = \pi_j \circ \pi_1$
\item \label{CDCax4} $\mathsf{D}[\left\langle f,g \right \rangle] = \left \langle \mathsf{D}[f], \mathsf{D}[g] \right \rangle$
\item \label{CDCax5} $\mathsf{D}[g \circ f] = \mathsf{D}[g] \circ \langle f \circ \pi_0, \mathsf{D}[f] \rangle$
\item \label{CDCax6} $\mathsf{D}\left[\mathsf{D}[f] \right] \circ \left \langle \langle a,b \rangle, \langle 0, c \rangle \right \rangle = \mathsf{D}[f] \circ \langle a,c \rangle$
\item \label{CDCax7} ${\mathsf{D}\left[\mathsf{D}[f] \right] \circ \left \langle \langle a,b \rangle, \langle c, 0 \rangle \right \rangle = \mathsf{D}\left[\mathsf{D}[f] \right] \circ \left \langle \langle a,c \rangle, \langle b, 0 \rangle \right \rangle}$
\end{enumerate}
\end{definition}

To help with the intuition, it is useful to use the term logic\index{term logic} of Cartesian differential categories as introduced in \cite[Section 4]{blute2009cartesian}, which expresses the differential combinator as: $\mathsf{D}[f](a,b) := \frac{\mathsf{d}f(x)}{\mathsf{d}x}(a) \cdot b$. {\bf [CD.1]} says that the derivative of a sum is equal to the sum of the derivatives, $\frac{\mathsf{d}f(x)+g(x)}{\mathsf{d}x}(a) \cdot b = \frac{\mathsf{d}f(x)}{\mathsf{d}x}(a) \cdot b + \frac{\mathsf{d}g(x)}{\mathsf{d}x}(a) \cdot b$, and that the derivative of zero maps is zero, $\frac{\mathsf{d}0}{\mathsf{d}x}(a) \cdot b = 0$. {\bf [CD.2]} says that derivatives are additive in their second argument, so that $\frac{\mathsf{d}f(x)}{\mathsf{d}x}(a) \cdot (b+c) = \frac{\mathsf{d}f(x)}{\mathsf{d}x}(a) \cdot b + \frac{\mathsf{d}f(x)}{\mathsf{d}x}(a) \cdot c$ and $\frac{\mathsf{d}f(x)}{\mathsf{d}x}(a) \cdot 0 =0$. {\bf [CD.3]} tells us what the derivatives of identity maps and projections maps are, so that $\frac{\mathsf{d}x}{\mathsf{d}x}(a) \cdot b = b$ and $\frac{\mathsf{d}\pi_i(x_0,x_1)}{\mathsf{d}(x_0,x_1)}(a_0,a_1) \cdot (b_0,b_1) = b_i$. {\bf [CD.4]} says the derivative of a pairing of maps is equal to the pairing of the derivatives, $\frac{\mathsf{d}\langle f(x), g(x) \rangle}{\mathsf{d}x}(a) \cdot b = \left \langle \frac{\mathsf{d}f(x)}{\mathsf{d}x}(a) \cdot b, \frac{\mathsf{d}g(x)}{\mathsf{d}x}(a) \cdot b \right \rangle$. {\bf [CD.5]} is the chain rule which tells us what the derivative of a composition of functions is, so $\frac{\mathsf{d}g\left( f(x) \right)}{\mathsf{d}x}(a) \cdot b = \frac{\mathsf{d}g(y)}{\mathsf{d}y}(f(a)) \cdot \left( \frac{\mathsf{d}f(x)}{\mathsf{d}x}(a) \cdot b \right) $. The last two axioms {\bf [CD.6]} and {\bf [CD.7]} may look somewhat mysterious but essentially they capture properties of partial differentiation. Indeed, in any Cartesian differential category, one can define partial differentiation by inserting zeros in the total differential \cite[Section 4.5]{blute2009cartesian}, which in the term logic is written respectively as: $\frac{\mathsf{d}f(x,b)}{\mathsf{d}x}(a) \cdot c := \frac{\mathsf{d}f(x,y)}{\mathsf{d}\langle x, y \rangle}(a,b) \cdot (c,0)$ and $ \frac{\mathsf{d}f(a,y)}{\mathsf{d}y}(b) \cdot c := \frac{\mathsf{d}f(x,y)}{\mathsf{d}\langle x, y \rangle}(a,b) \cdot (0,c)$. Thus, {\bf [CD.6]} tells us what the partial derivative in the second argument of a derivative is, $\frac{\mathsf{d}\frac{\mathsf{d}f(x)}{\mathsf{d}x}(a) \cdot z}{\mathsf{d}z}(b) \cdot c = \frac{\mathsf{d}f(x)}{\mathsf{d}x}(a) \cdot c$, while {\bf [CD.7]} captures the symmetry of the partial derivatives, $\frac{\mathsf{d}\frac{\mathsf{d}f(x)}{\mathsf{d}x}(y) \cdot b}{\mathsf{d}(y)}(a) \cdot c = \frac{\mathsf{d}\frac{\mathsf{d}f(y)}{\mathsf{d}y}(x) \cdot c}{\mathsf{d}(x)}(a) \cdot b$. We note that partial derivatives are also differential combinators for the simple slice categories \cite[Corollary 4.5.2]{blute2009cartesian}, and the total derivative is equal to the sum of the partial derivatives \cite[Lemma 4.5.1]{blute2009cartesian}, that is, $\frac{\mathsf{d}f(x,y)}{\mathsf{d}\langle x, y \rangle}(a,b) \cdot (c,d) = \frac{\mathsf{d}f(x,b)}{\mathsf{d}x}(a) \cdot c + \frac{\mathsf{d}f(a,y)}{\mathsf{d}y}(b) \cdot d$. More discussion on the differential combinator axioms can be found in \cite[Remark 2.1.3]{blute2009cartesian}. 

Here are now our main examples of Cartesian differential categories (see \cite{cockett2020linearizing} for a list of more examples of Cartesian differential categories): 

\begin{example}\label{ex:biproduct} \normalfont Let $R$ be a commutative ring and let $R\text{-}\mathsf{MOD}$ be the category of $R$-modules and $R$-linear maps between them. $R\text{-}\mathsf{MOD}$ is a Cartesian differential category where for an $R$-linear map $M \xrightarrow{f} N$, its derivative $M \times M \xrightarrow{\mathsf{D}[f]} N$ is defined as $\mathsf{D}[f](x,y) = f(y)$. More generally, every category with finite biproducts is a Cartesian differential category where ${\mathsf{D}[f] = f \circ \pi_1}$. 
\end{example}

\begin{example}\label{ex:smooth} \normalfont Let $\mathbb{R}$ be the set of real numbers. Define $\mathsf{SMOOTH}$ as the category whose objects are the Euclidean real vector spaces $\mathbb{R}^n$ and whose maps are the real smooth functions ${\mathbb{R}^n \xrightarrow{F} \mathbb{R}^m}$ between them. $\mathsf{SMOOTH}$ is a Cartesian differential category where the differential combinator is defined as the directional derivative of a smooth function. Recall that a smooth function $\mathbb{R}^n \xrightarrow{F} \mathbb{R}^m$ is in fact a tuple $F = \langle f_1, \hdots, f_m \rangle$ of smooth functions $\mathbb{R}^n \xrightarrow{f_i} \mathbb{R}$. Then using the convention that $\vec x \in \mathbb{R}^n$ are column vectors, the derivative $\mathbb{R}^n \times \mathbb{R}^n \xrightarrow{\mathsf{D}[F]} \mathbb{R}^m$ is defined as multiplying the Jacobian matrix of $F$ at the first argument $\vec x$, which is an $m \times n$ matrix $\mathbf{J}(F)(\vec x)$, with the second argument $\vec y$, seen as an $n \times 1$ matrix:
\[{\small \mathsf{D}[F](\vec x, \vec y) := \mathbf{J}(F)(\vec x)\vec y = \begin{bmatrix} \frac{\partial f_1}{\partial x_1}(\vec x) & \frac{\partial f_1}{\partial x_2}(\vec x) & \hdots & \frac{\partial f_1}{\partial x_n}(\vec x) \\
 \frac{\partial f_2}{\partial x_1}(\vec x) & \frac{\partial f_2}{\partial x_2}(\vec x) & \hdots & \frac{\partial f_2}{\partial x_n}(\vec x) \\
 \vdots & \vdots & \vdots & \vdots \\
 \frac{\partial f_m}{\partial x_1}(\vec x) & \frac{\partial f_m}{\partial x_2}(\vec x) & \hdots & \frac{\partial f_m}{\partial x_n}(\vec x) 
\end{bmatrix} \begin{bmatrix} y_1 \\ y_2 \\ \vdots \\ y_n
\end{bmatrix} = \begin{bmatrix} \sum \limits^n_{i=1} \frac{\partial f_1}{\partial x_i}(\vec x) y_i \\ \vdots \\ \sum \limits^n_{i=1} \frac{\partial f_m}{\partial x_i}(\vec x) y_i
\end{bmatrix} } \]
When $m=1$, for a smooth function $\mathbb{R}^n \xrightarrow{f} \mathbb{R}$, $\mathsf{D}[f](\vec x, \vec y)$ is precisely the directional derivative of $f$ at point $\vec x$ and along the vector $\vec y$. 
\end{example}

\begin{example} \normalfont An important source of examples of Cartesian differential categories is the coKleisli categories of differential categories \cite{blute2006differential}. Very briefly (and leaving out most of the details), a differential category \cite[Definition 2.4]{blute2006differential} is a symmetric monoidal category (with tensor product $\otimes$ and unit $k$) with a comonad $\oc$ which comes equipped with a deriving transformation $\oc A \otimes A \xrightarrow{\mathsf{d}} A$ satisfying certain coherences which capture the basic algebraic properties of differentiation \cite[Definition 7]{Blute2019}. Examples of differential categories can be found in \cite[Section 9]{Blute2019}. By \cite[Proposition 3.2.1]{blute2009cartesian}, when a differential category has finite products, the coKleisli category of $\oc$ is a Cartesian differential category where the differential combinator is defined using the deriving transformation. 
\end{example}

An important class of maps in a Cartesian differential category are the linear maps and maps which are linear in certain arguments. Essentially, a map is linear in an argument if when differentiating with respect to that argument (and keeping the other arguments constant), one gets back the starting map. 

\begin{definition} In a Cartesian differential category $\mathbb{X}$ with differential combinator $\mathsf{D}$:
\begin{enumerate}[{\em (i)}]
\item A map $A \xrightarrow{f} B$ is \textbf{linear} \cite[Definition 2.2.1]{blute2009cartesian} if $\mathsf{D}[f] \circ \langle a, b \rangle = f \circ b$ (i.e.\ $\frac{\mathsf{d}f(x)}{\mathsf{d}x}(a) \cdot b = f(b)$);
\item A map $A \times B \xrightarrow{f} C$ is \textbf{linear in its first argument} if $f$ is linear with respect to the partial derivative in its first argument, that is, $\mathsf{D}[f] \circ \left \langle \langle a,b \rangle, \langle c, 0 \rangle \right \rangle = f \circ \langle c, b \rangle$ (i.e.\ ${\frac{\mathsf{d}f(x,b)}{\mathsf{d}x}(a) \cdot c = f(c,b)}$); 
\item A map $A \times B \xrightarrow{f} C$ is \textbf{linear in its second argument} if $f$ is linear with respect to the partial derivative in its second argument, that is, $\mathsf{D}[f] \circ \left \langle \langle a,b \rangle, \langle 0, d \rangle \right \rangle = f \circ \langle a, d \rangle$ (i.e.\ $\frac{\mathsf{d}f(a,y)}{\mathsf{d}y}(b) \cdot d = f(a,d)$); 
\item A map $A \times B \xrightarrow{f} C$ is \textbf{bilinear} if it is linear in its first argument and linear in its second argument, or equivalently if $\mathsf{D}[f] \circ \left \langle \langle a,b \rangle, \langle c, d \rangle \right \rangle = f \circ \langle a, d \rangle + f \circ \langle c,b \rangle$ (i.e.\ ${\frac{\mathsf{d}f(x,y)}{\mathsf{d}( x, y)}(a,b) \cdot (c,d) = f(a,d) + f(c,b)}$). 
\end{enumerate}
Define the subcategory of linear maps $\mathsf{Lin}[\mathbb{X}]$ to be the category whose objects are the same as $\mathbb{X}$ and whose maps are linear in $\mathbb{X}$, and let $\mathsf{U}: \mathsf{Lin}[\mathbb{X}] \to \mathbb{X}$ be the obvious inclusion functor. 
\end{definition}

By {\bf [CD.6]}, for any map $f$, its derivative $\mathsf{D}[f]$ is linear in its second argument, while by {\bf [CD.3]}, identity maps and projections are linear. In fact, by \cite[Lemma 2.2.2]{blute2009cartesian}, linear maps are also closed under composition, addition, and pairings, which implies that $\mathsf{Lin}[\mathbb{X}]$ is a Cartesian left additive category with the same structure as $\mathbb{X}$. Furthermore, every linear map is also additive (though the converse is not always true), and as a result, the finite product structure of $\mathbb{X}$ becomes a finite biproduct structure in $\mathsf{Lin}[\mathbb{X}]$, where in particular, the injection maps are defined as $A \xrightarrow{\iota_0 := \langle 1_A,0 \rangle} A \times B$ and $B \xrightarrow{\iota_1 := \langle 0, 1_B \rangle} A \times B$. 

\begin{example} \normalfont In $R\text{-}\mathsf{MOD}$, every $R$-linear map is linear in the Cartesian differential sense, so we have that $\mathsf{Lin}[R\text{-}\mathsf{MOD}] = R\text{-}\mathsf{MOD}$. On the other hand, $M \times M^\prime \xrightarrow{f} N$ is linear in its first argument (resp.\ second argument) if and only if $f(x,y) = f(x,0)$ (resp.\ $f(x,y) = f(0,y)$). Therefore, the only maps which are bilinear in the Cartesian differential sense are the zero maps. The same story holds true for any category with finite biproducts. 
\end{example}

\begin{example} \normalfont In $\mathsf{SMOOTH}$, a smooth function $\mathbb{R}^n \xrightarrow{F} \mathbb{R}^m$ is linear in the Cartesian differential sense if and only if it is $\mathbb{R}$-linear in the classical sense, that is, $F(s \vec x + t \vec y) = sF(\vec x) + t F(\vec y)$ for all $s,t \in \mathbb{R}$ and $\vec x, \vec y \in \mathbb{R}^n$. So $\mathsf{Lin}[\mathsf{SMOOTH}]$ is the category of $\mathbb{R}$-linear maps between the $\mathbb{R}$-vector spaces $\mathbb{R}^n$. Similarly, a smooth function $\mathbb{R}^n \times \mathbb{R}^k \xrightarrow{G} \mathbb{R}^m$ is linear in its first argument (resp.\ second argument) if and only if it is $\mathbb{R}$-linear in its first argument $\mathbb{R}^n$ (resp.\ second argument $\mathbb{R}^k$), that is, $G(s \vec x + t \vec y, \vec z) = sG(\vec x, \vec z) + t G(\vec y, \vec z)$ (resp.\ $G(\vec z, s \vec x + t \vec y) = sG(\vec z, \vec x) + t G(\vec z, \vec x)$). Thus, $\mathbb{R}^n \times \mathbb{R}^k \xrightarrow{G} \mathbb{R}^m$ is bilinear if and only if it is $\mathbb{R}$-bilinear. 
\end{example}

We conclude this section by reviewing Cartesian closed differential categories (also sometimes called differential $\lambda$-categories \cite{bucciarelli2010categorical,manzonetto2012}). As the name suggests, these are Cartesian differential categories whose underlying category is also Cartesian closed and such that the differential combinator is compatible with the curry operator. For a Cartesian closed category, we denote the internal hom by $[A,B]$, the evaluation map by $[A,B] \times A \xrightarrow{\epsilon} B$, and for a map $A \times B \xrightarrow{f} C$, its curry is denoted $A \xrightarrow{\lambda(f)} [B,C]$. 

\begin{definition}\label{CDCcloseddef} A \textbf{Cartesian closed differential category} \cite[Section 4.6]{Cockett-2019} is a Cartesian differential category which is also a Cartesian closed category such that all evaluation maps $[A,B] \times A \xrightarrow{\epsilon} B$ are linear in their first argument (note that this implies that $\lambda(f+g) = \lambda(f) + \lambda(g)$ and $\lambda(0) = 0$). 
\end{definition}

Here are now some examples of Cartesian closed differential categories. 

\begin{example} \normalfont Every model of the differential $\lambda$-calculus \cite{ehrhard2003differential} induces a Cartesian closed differential category \cite[Theorem 4.3]{Cockett-2019}, and conversely, every Cartesian closed differential category induces a model of the differential $\lambda$-calculus \cite[Theorem 4.12]{bucciarelli2010categorical}. 
\end{example}

\begin{example} \normalfont $R\text{-}\mathsf{MOD}$ and $\mathsf{SMOOTH}$ are not Cartesian closed differential categories since neither is Cartesian closed. 
\end{example}

\begin{example} \normalfont A differential storage category \cite[Definition 10]{Blute2019} is a differential category with finite products and Seely isomorphisms, that is, $!(A \times B) \cong \oc A \otimes \oc B$ and $\oc \top \cong k$ are isomorphisms. By \cite[Theorem 4.4.2]{blute2015cartesian}, for a differential storage category whose base category is also symmetric monoidal closed (with internal homs denoted by $A \multimap B$), the coKleisli category is a Cartesian closed differential category where $[A,B] = \oc A \multimap B$. Examples can be found in \cite[Section 5]{bucciarelli2010categorical}.
\end{example}

\begin{example} \normalfont The category of convenient vector spaces and smooth functions between them is a Cartesian closed differential category \cite[Example 2.4.2]{manzyuk2012tangent}. For a detailed introduction to convenient vector spaces, see \cite{kriegl1997convenient}. Briefly, a convenient vector space is a locally convex vector space $E$ such that for every smooth function $\mathbb{R} \xrightarrow{c} E$, called a smooth curve, there exists another smooth curve $\mathbb{R} \xrightarrow{\overline{c}} E$ such that $\overline{c}^\prime = c$. If $E$ and $F$ are convenient vector spaces, then a smooth function is a function $E \xrightarrow{f} F$ which preserves smooth curves, that is, if $c$ is a smooth curve of $E$, then $f \circ c$ is a smooth curve of $F$. Let $\mathsf{CON}$ be the category of convenient vector spaces and smooth functions between them. $\mathsf{CON}$ is a Cartesian closed differential category where the internal hom is given by $[E,F] = \lbrace E \xrightarrow{f} F \vert ~ \text{$f$ is smooth} \rbrace$, which is indeed a convenient vector space, and for a smooth function $E \xrightarrow{f} F$, its derivative $E \times E \xrightarrow{\mathsf{D}[f]} F$ is defined as $\mathsf{D}[f](x, y) := \lim \limits_{t \to 0} \frac{f(x + ty) - f(x)}{t}$, where $t \in \mathbb{R}$. Note that while $\mathsf{SMOOTH}$ is not Cartesian closed, $\mathsf{SMOOTH}$ is a sub-Cartesian differential category of $\mathsf{CON}$. Furthermore, $\mathsf{CON}$ is also the coKleisli category of a differential storage category \cite{blute2010convenient}.  
\end{example}

\section{Linearly Closed Cartesian Differential Categories and Jacobians}

In this section, we introduce the main contribution of this paper: linearly closed Cartesian differential categories, which are the appropriate setting for defining Jacobians (in the Fr{\'e}chet/Gateaux derivative sense) in the context of Cartesian differential categories. 

\begin{definition} A \textbf{linearly closed Cartesian differential category} is a Cartesian differential category such that for each pair of objects $A$ and $B$, there is an object $\mathcal{L}(A,B)$, called the \textbf{internal linear hom}, and a bilinear map $\mathcal{L}(A,B) \times A \xrightarrow{\epsilon_\ell} B$, called the \textbf{evaluation map}, such that for every map $A \times B \xrightarrow{f} C$ which is linear in its second argument, there exists a unique map ${A \xrightarrow{\lambda_\ell(f)} \mathcal{L}(B,C)}$, called the \textbf{linear curry} of $f$, such that $f = A \times B \xrightarrow{\lambda_\ell(f) \times 1_B} \mathcal{L}(B,C) \times B \xrightarrow{\varepsilon_\ell} C$. 
\end{definition}

It is worth highlighting the major differences between being linearly closed and Cartesian closed. (1) $\mathcal{L}(A,B)$ should be interpreted as the internal version of $\mathsf{Lin}[\mathbb{X}](A,B)$, the linear maps from $A$ to $B$, while $[A,B]$ is the internal version of $\mathbb{X}(A,B)$, all maps from $A$ to $B$. (2) The evaluation map $\mathcal{L}(A,B) \times A \xrightarrow{\epsilon_\ell} B$ is bilinear, while $[A,B] \times A \xrightarrow{\epsilon} B$ is only linear in its first argument. The difference here is due to the fact that linear maps are additive, while an arbitrary map may not be. (3) In a linearly closed setting, we are only able to curry linear arguments. (4) It is possible to be linearly closed and not Cartesian closed, and vice versa. We stress that being linearly closed does not imply that $\mathsf{Lin}[\mathbb{X}]$ is Cartesian closed. However, if a Cartesian differential category $\mathbb{X}$ has a monoidal representation in the sense of \cite[Section 3.2]{blute2015cartesian}, then we conjecture that being linearly closed is equivalent to $\mathsf{Lin}[\mathbb{X}]$ being monoidal closed (this approach has been studied by Gallagher and MacAdam \cite{benandjon}, and we note that both $\mathsf{SMOOTH}$ and $\mathsf{CON}$ have monoidal representation). 

Now recall that {\bf [CD.6]} says that for any map $f$, $\mathsf{D}[f]$ is linear in its second argument. Therefore, the Jacobian of $f$ is defined as the linear curry of its derivative: 

\begin{definition} In a linearly closed Cartesian differential category, the \textbf{Jacobian} of a map $A \xrightarrow{f} B$ is the map $A \xrightarrow{\mathbf{J}(f)} \mathcal{L}(A,B)$ defined as the linear curry of $A \times A \xrightarrow{\mathsf{D}[f]} B$, that is, $\mathbf{J}(f) : = \lambda_\ell( \mathsf{D}[f] )$. 
\end{definition}

Here are now some examples of linearly closed Cartesian differential categories and Jacobians: 

\begin{example} \normalfont $R\text{-}\mathsf{MOD}$ is not a linearly closed Cartesian differential category since the only bilinear maps are zero maps. The same is true for any category with finite biproducts. 
\end{example}

\begin{example} \normalfont $\mathsf{SMOOTH}$ is a linearly closed Cartesian differential category where the internal linear hom is $\mathcal{L}(\mathbb{R}^n, \mathbb{R}^m) = \mathbb{R}^{nm}$ and the evaluation map ${\mathbb{R}^{nm} \times \mathbb{R}^n \xrightarrow{\epsilon_\ell} \mathbb{R}^m}$ is defined as laying out the first argument, which is a column vector of size $nm$, into a matrix of size $m \times n$ and then multiplying by the second argument, seen as a matrix of size $n \times 1$: 
\[{\small\epsilon_\ell \left( \vec x, \vec y \right) := \begin{bmatrix} x_1 & x_2 & \hdots & x_n \\
x_{n+1}& x_{n+2} & \hdots & x_{2n} \\
 \vdots & \vdots & \vdots & \vdots \\
x_{(m-1)n + 1} & x_{(m-1)n + 2} & \hdots & x_{mn}
\end{bmatrix} \begin{bmatrix} y_1 \\ y_2 \\ \vdots \\ y_n
\end{bmatrix} = \begin{bmatrix} \sum \limits^n_{i=1} x_{i} y_i \\ \vdots \\ \sum \limits^n_{i=1} x_{(m-1)n + i}y_i
\end{bmatrix}} \]
When $m = 1$, the evaluation map is given by the dot product of vectors $\epsilon_\ell( \vec x, \vec y) = \vec x \cdot \vec y = \sum \limits^n_{i=1} x_{i} y_i$. To define the linear curry, let $e_i \in \mathbb{R}^n$ be the canonical basis vectors, $e_i = [0, \hdots, 0, 1, 0, \hdots, 0]^\mathsf{T}$. Then for a smooth function $\mathbb{R}^n \times \mathbb{R}^k \xrightarrow{G = \langle g_1, \hdots, g_m \rangle} \mathbb{R}^m$, which is linear in its second argument, define $\mathbb{R}^n \xrightarrow{\lambda(G)} \mathbb{R}^{km}$ as $\lambda(G)(\vec x) = [g_1(\vec x, e_1), \hdots, g_1(\vec x, e_k),g_2(\vec x, e_1), \hdots, g_m(\vec x, e_k)]^\mathsf{T}$. For a smooth function ${\mathbb{R}^n \xrightarrow{F = \langle f_1, \hdots, f_m \rangle} \mathbb{R}^m}$, taking $\mathbf{J}(F)(\vec x) = \lambda_\ell( \mathsf{D}[F] )(\vec x)$ results precisely in the Jacobain matrix of $F$ at $\vec x$ interpreted as column vector of size $nm$, $\mathbf{J}(F)(\vec x) = [\frac{\partial f_1}{\partial x_1}(\vec x), \hdots, \frac{\partial f_1}{\partial x_n}(\vec x) ,\frac{\partial f_2}{\partial x_1}(\vec x), \hdots, \frac{\partial f_m}{\partial x_n}(\vec x)]^\mathsf{T}$, which when post-composed in the evaluation map results in laying it out back into an $m \times n$ matrix.
\end{example}

\begin{example} \normalfont For a differential category with finite products, if the base category is symmetric monoidal closed, then the coKleisli category will be a linearly closed differential category where the internal linear hom is given by $\mathcal{L}(A,B) : = A \multimap B$. For a coKleisli map $\oc A \xrightarrow{f} B$, its Jacobian $\oc A \xrightarrow{\mathbf{J}(f)} A \multimap B$ is define as the curry in the base symmetric monoidal category of the composite $\oc A \otimes A \xrightarrow{d} \oc A \xrightarrow{f} B$. Note that unlike in the Cartesian closed case, we do not need to assume the presence of Seely isomorphisms for the linearly closed case. That said, it can be shown that the coKleisli category of a differential storage category is linearly closed if and only if the base category is symmetric monoidal closed. Furthermore, this internal linear hom captures the notion of linear types in V\'ak\'ar's work on automatic differentiation \cite{vakar2021chad}. 
\end{example}

\begin{example} \normalfont $\mathsf{CON}$ is a linearly closed Cartesian differential category where the internal linear hom is $\mathcal{L}(E, F) = \lbrace E \xrightarrow{f} F \vert ~ \text{$f$ is smooth and $\mathbb{R}$-linear} \rbrace \subset [E,F]$, and the evaluation map $\mathcal{L}(E,F) \times E \xrightarrow{\epsilon_\ell} F$ is defined in the obvious way, $\epsilon_\ell(f, x) = f(x)$. A smooth function $E \times E^\prime \xrightarrow{g} F$ is linear in its second argument if and only if $g$ is $\mathbb{R}$-linear in its second argument. So if $g$ is linear in its second argument, then $E \xrightarrow{\lambda_\ell(g)} \mathcal{L}(E^\prime, F)$ is defined as $\lambda_\ell(g)(x)(y) = g(x,y)$. For a smooth function $E \xrightarrow{f} F$, its Jacobian $E \xrightarrow{\mathbf{J}(f)} \mathcal{L}(E, F)$ is defined as $\mathbf{J}(f)(x)(y) = \mathsf{D}[f](x,y)$. 
\end{example}

Being linearly closed shares many similar looking properties to being Cartesian closed. In term logic notation, we write $\lambda_\ell(f)(a) = \lambda_\ell y. f(a,y)$ and $\epsilon_\ell(g, a) = g(a)$, and $\left( \lambda_\ell y. f(a,y) \right) (b) = f(a,b)$, and so $\mathbf{J}(f)(a) = \lambda_\ell y. \frac{\mathsf{d}f(x)}{\mathsf{d}x}(a) \cdot y$. 

\begin{proposition}\label{LCCDCprop} Let $\mathbb{X}$ be a linearly closed Cartesian differential category. 
\begin{enumerate}
\item $\lambda_\ell(f + g) = \lambda_\ell(f) + \lambda_\ell(g)$ and $\lambda_\ell(0) = 0$.
\item $\mathsf{D}[\lambda_\ell(f)] = \lambda_\ell\left( \mathsf{D}[f] \circ \left \langle \langle \pi_0 \circ \pi_0, 0 \rangle, \langle \pi_1 \circ \pi_0, \pi_1 \rangle \right \rangle \right)$ (i.e.\ $\frac{\mathsf{d}\lambda_\ell y.f(x,y)}{\mathsf{d}x}(a) \cdot b = \lambda_\ell y. \frac{\mathsf{d}f(x,y)}{\mathsf{d}x}(a) \cdot c$). 
\item There is a functor $\mathsf{LIN}[\mathbb{X}]^{op} \times \mathsf{LIN}[\mathbb{X}] \xrightarrow{\mathcal{L}} \mathsf{LIN}[\mathbb{X}]$ which maps a pair of object to the internal linear hom $\mathcal{L}(A,B)$, and sends a pair of linear maps $A \xrightarrow{f} B$ and $X \xrightarrow{g} Y$ to the linear map $\mathcal{L}(B,X) \xrightarrow{\mathcal{L}(f,g)} \mathcal{L}(A,Y)$ defined as the linear curry of $\mathcal{L}(B,X) \times A \xrightarrow{1 \times f} \mathcal{L}(B,X) \times B \xrightarrow{\epsilon_\ell} X \xrightarrow{g} Y$. 
\item There are natural isomorphisms: (i) $\mathcal{L}(A, B \times C) \cong \mathcal{L}(A, B) \times \mathcal{L}(A, C)$, (ii) $\mathcal{L}(A, C) \times \mathcal{L}(B, C) \cong \mathcal{L}(A \times B, C)$, (iii) $\mathcal{L}(A, \top) \cong \top \cong \mathcal{L}(\top, A)$, and (iv) $\mathcal{L}\left( A, \mathcal{L}(B,C) \right) \cong \mathcal{L}\left( B, \mathcal{L}(A,C) \right)$. 
\end{enumerate}
\end{proposition}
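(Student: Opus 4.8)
The plan is to prove each of the four parts by exploiting the defining universal property of the linear curry: a map $A \times B \xrightarrow{f} C$ linear in its second argument factors uniquely as $\epsilon_\ell \circ (\lambda_\ell(f) \times 1_B)$, and conversely any $A \xrightarrow{h} \mathcal{L}(B,C)$ arises as $\lambda_\ell$ of the composite $\epsilon_\ell \circ (h \times 1_B)$. The two workhorses throughout will be \emph{uniqueness} of the linear curry (to prove two maps into $\mathcal{L}(B,C)$ are equal, it suffices to show they become equal after $\epsilon_\ell \circ (- \times 1_B)$) and the fact that the bilinear evaluation map $\epsilon_\ell$ is additive in each argument and linear, which lets the differential combinator axioms pass through it.

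For parts (1) and (2) I would work directly with the universal property. For (1), both $\lambda_\ell(f+g)$ and $\lambda_\ell(f)+\lambda_\ell(g)$ become, after post-composing with $\epsilon_\ell \circ (- \times 1_B)$, the map $f+g$; here I use that $\epsilon_\ell$ is additive in its first argument (it is bilinear) so that $\epsilon_\ell \circ ((\lambda_\ell(f)+\lambda_\ell(g)) \times 1) = \epsilon_\ell \circ (\lambda_\ell(f) \times 1) + \epsilon_\ell \circ (\lambda_\ell(g) \times 1) = f + g$, and similarly $\lambda_\ell(0)=0$; uniqueness then forces the claimed equalities. For (2), I would first check that the map $\mathsf{D}[f] \circ \langle \langle \pi_0 \circ \pi_0, 0\rangle, \langle \pi_1 \circ \pi_0, \pi_1 \rangle \rangle$ is genuinely linear in its last argument (so that $\lambda_\ell$ of it is defined); this follows from {\bf [CD.6]} together with the rewiring, and is most transparently verified in the term logic. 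Then I would compute $\mathsf{D}[\lambda_\ell(f)]$ by applying the differential combinator to the equation $f = \epsilon_\ell \circ (\lambda_\ell(f) \times 1_B)$, using the chain rule {\bf [CD.5]}, {\bf [CD.4]}, the bilinearity of $\epsilon_\ell$ (which makes $\mathsf{D}[\epsilon_\ell]$ split into two evaluation terms by the bilinear characterization in the Definition), and finally re-apply the universal property to read off $\mathsf{D}[\lambda_\ell(f)]$.

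For part (3), the main content is checking functoriality. I would define $\mathcal{L}(f,g)$ as stated, first verifying that the composite $g \circ \epsilon_\ell \circ (1 \times f)$ is linear in its $A$-argument: since $f$ and $g$ are linear and $\epsilon_\ell$ is linear in its second argument, the composite is linear in the second slot, so its linear curry exists. Linearity of $\mathcal{L}(f,g)$ itself, preservation of identities ($\mathcal{L}(1,1)=1$, by uniqueness against $\epsilon_\ell$), and preservation of composition ($\mathcal{L}(f'\circ f, g'\circ g) = \mathcal{L}(f,g')\circ\mathcal{L}(f',g')$ with the contravariance in the first slot) would each be proven by post-composing with $\epsilon_\ell$ and invoking uniqueness, unwinding the definitions until both sides present the same underlying map.

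For part (4), I would construct each isomorphism as a linear curry and exhibit its inverse, proving the isomorphism equations by uniqueness. Isomorphisms (i), (ii), and (iii) should follow from the biproduct structure of $\mathsf{Lin}[\mathbb{X}]$ noted after the linear-maps definition (finite products are biproducts there), combined with the product/coproduct behaviour of $\mathcal{L}$ under $\epsilon_\ell$; for instance (i) uses that a map into a product is a pair, and (ii) uses the injections $\iota_0,\iota_1$ to split the first argument. The hardest part, and the one I expect to be the main obstacle, is the symmetry isomorphism (iv): here one must curry a map in \emph{two} linear arguments successively and show the resulting double-currying is symmetric. The natural candidate morphism sends $h \in \mathcal{L}(A,\mathcal{L}(B,C))$ to the double linear curry of $(a,b) \mapsto \epsilon_\ell(\epsilon_\ell(h(a),b))$ with the roles of $a$ and $b$ swapped, and the crux is verifying that swapping the two linear arguments is legitimate — this is exactly where {\bf [CD.7]}, the symmetry of partial derivatives, must be invoked to ensure the swapped map is still linear in the appropriate argument and that the two double-curries are mutually inverse. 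I would manage this by writing everything in the term logic, reducing the required identity to an instance of {\bf [CD.7]}, and then confirming naturality by the usual post-composition-with-$\epsilon_\ell$ argument.
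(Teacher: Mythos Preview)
Your approach is correct and matches the paper's sketch in spirit: both reduce (1) and (2) to the same universal-property computations as in the Cartesian closed case (the paper cites \cite[Lemma 4.10]{Cockett-2019}), both treat (3) and parts (4)(i), (iii) exactly as one would for an ordinary internal hom, and both derive (4)(ii) and the $\mathcal{L}(\top,A)\cong\top$ half of (4)(iii) from the biproduct structure on $\mathsf{Lin}[\mathbb{X}]$.

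The one place you diverge is (4)(iv), and here you are making it harder than it is. You flag $\mathcal{L}(A,\mathcal{L}(B,C))\cong\mathcal{L}(B,\mathcal{L}(A,C))$ as the main obstacle and claim that {\bf [CD.7]} is the crux. It is not. The paper says this case is ``essentially the same as in the Cartesian closed case,'' and that is accurate: once you know $\epsilon_\ell$ is bilinear, the double evaluation $\epsilon_\ell\circ(\epsilon_\ell\times 1)$ is linear in each of its three arguments separately, and swapping the $A$ and $B$ slots is precomposition with a linear isomorphism, which preserves linearity in each argument by the chain rule {\bf [CD.5]} alone. To check that the single curry $\lambda_\ell(E')$ is linear in its remaining $B$-argument you use your already-proved part (2) together with bilinearity of $\epsilon_\ell$; unwinding, the required identity reduces to $\epsilon_\ell$ being linear in its second argument, not to any symmetry of second partials. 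The mutual-inverse check is then pure universal property. So {\bf [CD.7]} never enters, and there is no obstacle here beyond careful bookkeeping. Your instinct that ``swapping linear arguments'' needs the symmetry axiom is a confusion: {\bf [CD.7]} governs iterated differentiation of a single map, whereas here you are permuting arguments of a map already known to be multilinear.
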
 
\begin{proof} (Sketch) The computations for (1) and (2) are essentially the same as those in \cite[Lemma 4.10]{Cockett-2019}. The proofs that $\mathcal{L}$ is a functor, $\mathcal{L}(A, \top) \cong \top$, $\mathcal{L}(A, B \times C) \cong \mathcal{L}(A, B) \times \mathcal{L}(A, C)$, and $\mathcal{L}\left( A, \mathcal{L}(B,C) \right) \cong \mathcal{L}\left( B, \mathcal{L}(A,C) \right)$ are essentially the same as in the Cartesian closed case. On the other hand, $\top \cong \mathcal{L}(\top, A)$ and $\mathcal{L}(A, C) \times \mathcal{L}(B, C) \cong \mathcal{L}(A \times B, C)$ intuitively follow from the fact that $\top$ is an initial object and $\times$ is a coproduct in $\mathsf{LIN}[\mathbb{X}]$. In other words, the only linear map of type $\top \to A$ is $0$, while a linear map of type $A \times B \to C$ is actually a pair of linear maps $A \to C$ and $B \to C$. 
\end{proof}

We conjecture that being linearly closed induces a non-unital closed category structure on $\mathsf{LIN}[\mathbb{X}]$. The non-unital part comes from the fact that there may not be an object $R$ such that $A \cong \mathcal{L}(R,A)$ for all objects $A$. While both $\mathsf{SMOOTH}$ and $\mathsf{CON}$ have such an object, namely $R = \mathbb{R}$, if we take $\mathsf{SMOOTH}_{even}$ the subcategory of Euclidean spaces of even dimension, then $\mathsf{SMOOTH}_{even}$ is still a linearly closed Cartesian differential category, but does not have a unit. In future work, it will be of interest to study linearly closed Cartesian differential categories with both a unit and monoidal representation. Furthermore, note that in the Cartesian closed setting we have that $[A, [B,C]] \cong [A \times B,C]$, but in the linearly closed setting, $\mathcal{L}\left( A, \mathcal{L}(B,C) \right)$ represents the bilinear maps $A \times B \to C$, which is different from $\mathcal{L}\left( A \times B, C \right)$, which represents the linear maps $A \times B \to C$, and so $\mathcal{L}\left( A, \mathcal{L}(B,C) \right) \ncong \mathcal{L}\left( A \times B, C \right)$. 

Next we turn our attention to properties of the Jacobian. In particular, we provide analogues of three basic classical identities: (1) $\mathbf{J}(F+G)(\vec x) = \mathbf{J}(F)(\vec x) + \mathbf{J}(G)(\vec x)$, (2) if $F$ is $\mathbb{R}$-linear with associated matrix $A$ (i.e.\ $F(\vec x) = A \vec x$) then $\mathbf{J}(F)(\vec x) = A$, and (3) $\mathbf{J}(G \circ F)(\vec x) = \mathbf{J}(G)(F(\vec x)) \mathbf{J}(F)(\vec x)$. The first identity is easy to generalize, while the latter two requires defining some extra maps. For a linear map $A \xrightarrow{f} B$, let $\top \xrightarrow{p_f} \mathcal{L}(A,B)$ be the linear curry of the composite $\top \times A \xrightarrow{\pi_1} A \xrightarrow{f} B$, so in term logic notation $p_f= \lambda_\ell x. f(x)$. Next, define the map ${\mathcal{L}(B,C) \times \mathcal{L}(A,B) \xrightarrow{\odot} \mathcal{L}(A,C)}$ as the linear curry of the composite ${\mathcal{L}(B,C) \times \mathcal{L}(A,B) \times A \xrightarrow{1_{\mathcal{L}(B,C)} \times \varepsilon_\ell} \mathcal{L}(B,C) \times B \xrightarrow{\varepsilon_\ell} C}$, and note that $\odot$ captures composition of linear maps, which we write in term logic notation as $g \odot f = \lambda_\ell x. g(f(x))$. We leave it to the reader to check for themselves that in $\mathsf{SMOOTH}$, if ${\mathbb{R}^n \xrightarrow{F} \mathbb{R}^m}$ is an $\mathbb{R}$-linear map with associated $n\times m$ matrix $A$, then $p_F$ is precisely $A$ in column vector form, while composition of $\mathbb{R}$-linear maps corresponds to matrix multiplication, and so $\odot$ plays the role of matrix multiplication. 

\begin{proposition}\label{propjacob} In a linearly closed Cartesian differential category: 
\begin{enumerate}[{\em (i)}]
\item $\mathbf{J}(f+g) = \mathbf{J}(f) + \mathbf{J}(g)$ and $\mathbf{J}(0) = 0$ (i.e.\ $\mathbf{J}(f+g)(a) = \mathbf{J}(f)(a) + \mathbf{J}(g)(a)$ and $\mathbf{J}(0)(a) = 0$). 
\item If $A \xrightarrow{f} B$ is linear, then $\mathbf{J}(f) = p_f \circ 0$ (i.e.\ $\mathbf{J}(f)(a) = \lambda_\ell x. f(x)$). 
\item $\mathbf{J}(g \circ f) = \odot \circ \langle \mathbf{J}(g) \circ f, \mathbf{J}(f) \rangle$ (i.e.\ $\mathbf{J}(g \circ f)(a) = \mathbf{J}(g)(f(a)) \odot \mathbf{J}(f)(a)$) 
\end{enumerate}
\end{proposition}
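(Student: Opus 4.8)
The unifying idea for all three parts is to lean on the \emph{uniqueness} clause of the universal property of the linear curry. Since $\mathbf{J}(h) = \lambda_\ell(\mathsf{D}[h])$ is by definition the unique map satisfying $\mathsf{D}[h] = \epsilon_\ell \circ (\mathbf{J}(h) \times 1)$, to prove that an explicitly built map $k \colon A \to \mathcal{L}(A,B)$ equals $\mathbf{J}(h)$ it is enough to check the single equation $\epsilon_\ell \circ (k \times 1) = \mathsf{D}[h]$. I would establish every identity in this uncurried form, so that maps into an internal linear hom are never compared directly.

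Part (i) is immediate: by \textbf{[CD.1]} we have $\mathsf{D}[f+g] = \mathsf{D}[f] + \mathsf{D}[g]$ and $\mathsf{D}[0] = 0$, and combining this with the additivity of $\lambda_\ell$ from Proposition \ref{LCCDCprop}(1) gives $\mathbf{J}(f+g) = \lambda_\ell(\mathsf{D}[f]) + \lambda_\ell(\mathsf{D}[g]) = \mathbf{J}(f)+\mathbf{J}(g)$ and $\mathbf{J}(0) = \lambda_\ell(0) = 0$. For part (ii), the first step is to note that linearity of $f$ says precisely $\mathsf{D}[f] = f \circ \pi_1$ as a map $A \times A \to B$. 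It then suffices to check that $p_f \circ 0$ uncurries to $\mathsf{D}[f]$. Rewriting $(p_f \circ 0) \times 1_A$ as $(p_f \times 1_A) \circ (0 \times 1_A)$ by functoriality of $\times$, and using the defining equation $\epsilon_\ell \circ (p_f \times 1_A) = f \circ \pi_1$ of $p_f$, one reduces $\epsilon_\ell \circ ((p_f \circ 0) \times 1_A)$ to $f \circ \pi_1 \circ (0 \times 1_A)$; since $\pi_1 \circ (0 \times 1_A) = \pi_1$, this is exactly $f \circ \pi_1 = \mathsf{D}[f]$, and uniqueness yields $p_f \circ 0 = \mathbf{J}(f)$.

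Part (iii) is the substantive case. Setting $h := \odot \circ \langle \mathbf{J}(g) \circ f, \mathbf{J}(f) \rangle$, the plan is to uncurry $h$ and match the result against the chain rule \textbf{[CD.5]}, $\mathsf{D}[g \circ f] = \mathsf{D}[g] \circ \langle f \circ \pi_0, \mathsf{D}[f] \rangle$. First expand $h \times 1_A$ as $(\odot \times 1_A) \circ (\langle \mathbf{J}(g) \circ f, \mathbf{J}(f)\rangle \times 1_A)$, then substitute the defining equation of $\odot$, namely $\epsilon_\ell \circ (\odot \times 1_A) = \epsilon_\ell \circ (1 \times \epsilon_\ell)$, followed by the defining equations $\epsilon_\ell \circ (\mathbf{J}(f) \times 1) = \mathsf{D}[f]$ and $\epsilon_\ell \circ (\mathbf{J}(g) \times 1) = \mathsf{D}[g]$ of the two Jacobians. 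Composed carefully, these collapse $\epsilon_\ell \circ (h \times 1_A)$ to $\mathsf{D}[g] \circ \langle f \circ \pi_0, \mathsf{D}[f]\rangle$, which equals $\mathsf{D}[g \circ f]$ by \textbf{[CD.5]}; uniqueness then gives $h = \mathbf{J}(g \circ f)$.

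The main obstacle is bookkeeping rather than conceptual content. In part (iii) one must thread the product maps correctly through the ternary object $\mathcal{L}(B,C) \times \mathcal{L}(A,B) \times A$, repeatedly invoking the interchange law for $\times$ and keeping the two nested applications of $\epsilon_\ell$ aligned with the right projections, so that the inner evaluation feeds $f \circ \pi_0$ and $\mathsf{D}[f]$ into the outer one in exactly the pattern \textbf{[CD.5]} demands. The term-logic reading, in which both sides become $\lambda_\ell y.\, \mathsf{D}[g](f(a), \mathsf{D}[f](a,y))$, makes the target transparent and serves as a reliable guide for organizing the point-free calculation.
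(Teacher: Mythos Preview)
Your proposal is correct and follows essentially the same approach as the paper: both arguments reduce each identity to the corresponding differential-combinator axiom (\textbf{[CD.1]} for (i), linearity for (ii), and the chain rule \textbf{[CD.5]} for (iii)) and then invoke the universal property of $\lambda_\ell$. The paper presents this via the term logic while you write it point-free, but as you yourself note at the end, these are two renderings of the same computation.
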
 
\begin{proof} (Sketch) The first identity follows immediately from \textbf{[CD.1]} and Proposition \ref{LCCDCprop}.(1). For the second identity, if $f$ is linear then using the term logic we compute: $\mathbf{J}(f)(a) = \lambda_\ell y. \frac{\mathsf{d}f(x)}{\mathsf{d}x}(a) \cdot y = \lambda_\ell y. f(y)$. Lastly for the third identity, we use the chain rule \textbf{[CD.5]} to compute: $\mathbf{J}(g \circ f)(a) = \lambda_\ell y. \frac{\mathsf{d}g(f(x))}{\mathsf{d}x}(a) \cdot y = \lambda_\ell y. \frac{\mathsf{d}g(z)}{\mathsf{d}z}(f(a)) \cdot \left( \frac{\mathsf{d}f(x)}{\mathsf{d}x}(a) \cdot y \right) = \lambda_\ell y. \mathbf{J}(g)(f(a))\left( \mathbf{J}(f)(a)(y) \right) = \mathbf{J}(g)(f(a)) \odot \mathbf{J}(f)(a)$. \end{proof}

We conjecture that it is possible to provide an equivalent alternative axiomatization of a linearly closed Cartesian differential category as a Cartesian left additive category equipped with $\mathcal{L}(-,-)$, $\varepsilon_\ell$, and $\mathbf{J}$, where one would define the differential combinator as $\mathsf{D}[-] = \varepsilon_\ell \circ (\mathbf{J}(-) \times 1)$. While this is definitely of interest, it does require a bit of work to properly set up everything. As such, this will be a story for another time. 

We finish this section by providing necessary and sufficient conditions for when a Cartesian closed differential category is also linearly closed, and then explaining how every Cartesian closed differential category embeds into a linearly closed Cartesian differential category via splitting linear idempotents. We start with the definition of linear idempotents splitting linearly (which is a special case of \cite[Section 4.5]{Cockett-2019}): 

\begin{definition} In a Cartesian differential category, a \textbf{linear split idempotent} is an idempotent $A \xrightarrow{e} A$ which is linear and such that there are linear maps $A \xrightarrow{r} B$ and $B \xrightarrow{s} A$ such that $s \circ r = e$ and $r \circ s = 1_B$. A \textbf{linear idempotent complete} Cartesian differential category is a Cartesian differential category such that all linear idempotents are linear split idempotents. 
\end{definition}

The key idea is that we will define a linear idempotent on the internal hom $[A,B]$ which linearly splits via the internal linear hom $\mathcal{L}(A,B)$. To define this idempotent, we will require the notion of partial linearization \cite[Proposition 5.5]{cockett2020linearizing}. So for a map $A \times B \xrightarrow{f} C$, its linearization in context $A$ is the map $A \times B \xrightarrow{\mathsf{L}^A[f]} C$ defined as $\mathsf{L}^A[f] := \mathsf{D}[f] \circ \langle \langle \pi_0, 0 \rangle, \langle 0, \pi_1 \rangle \rangle$, i.e.\ $\mathsf{L}^A[f](a,b) = \frac{\mathsf{d}f(a,y)}{\mathsf{d}y}(0) \cdot b$. The map $\mathsf{L}^A[f]$ is now linear in its second argument. So in a Cartesian closed differential category, define ${[A,B] \xrightarrow{\ell} [A,B]}$ as the linear curry of the partial linearization of the evaluation map ${[A,B] \times A \xrightarrow{\mathsf{L}^{[A,B]}\left[ \epsilon \right] } B}$, that is, $\ell = \lambda_\ell\left( \mathsf{L}^{[A,B]}\left[ \epsilon \right] \right)$. In term logic notation, we have that $\ell(f) = \lambda_\ell y. \frac{\mathsf{d}f(x)}{\mathsf{d}x}(0) \cdot y$, and so $\ell$ is understood as mapping an arbitrary map $f$ to its linearization \cite[Proposition 3.6]{cockett2020linearizing}. As such, it follows that $\ell$ is a linear idempotent. 

\begin{proposition} A Cartesian closed differential category is linearly closed if and only if $[A,B] \xrightarrow{\ell} [A,B]$ is a linear split idempotent. 
\end{proposition}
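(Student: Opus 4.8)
The plan is to treat this as a standard idempotent-splitting argument, exhibiting in each direction an explicit correspondence between the linearly closed data $(\mathcal{L}(A,B), \epsilon_\ell, \lambda_\ell)$ and a linear splitting $(\mathcal{L}(A,B), r, s)$ of $\ell$. Throughout I would work from the characterizing equation $\epsilon \circ (\ell \times 1_A) = \mathsf{L}^{[A,B]}[\epsilon]$ of $\ell$ and lean heavily on uniqueness of the ordinary curry $\lambda$ and, where available, of the linear curry $\lambda_\ell$; in several places I would verify the bookkeeping in the term logic, where each identity collapses to a one-line $\beta$-style computation.

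For the forward direction, assuming $\mathbb{X}$ is linearly closed I would take the splitting object to be $\mathcal{L}(A,B)$ itself and set $s := \lambda(\epsilon_\ell) : \mathcal{L}(A,B) \to [A,B]$ and $r := \lambda_\ell(\mathsf{L}^{[A,B]}[\epsilon]) : [A,B] \to \mathcal{L}(A,B)$. First I would check these are linear: $s$ is the curry of a map linear in its context argument, while $\mathsf{L}^{[A,B]}[\epsilon]$ is in fact bilinear (linearity in the first argument following from linearity of $\epsilon$ in its first argument together with \textbf{[CD.7]}), so $r$ is the linear curry of a bilinear map and hence linear by Proposition~\ref{LCCDCprop}. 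The two splitting equations then drop out of the defining equations $\epsilon \circ (s \times 1) = \epsilon_\ell$ and $\epsilon_\ell \circ (r \times 1) = \mathsf{L}^{[A,B]}[\epsilon]$ with uniqueness of curries: composing gives $\epsilon \circ ((s \circ r)\times 1) = \epsilon_\ell \circ (r\times 1) = \mathsf{L}^{[A,B]}[\epsilon] = \epsilon\circ(\ell\times 1)$, so $s\circ r = \ell$ by uniqueness of $\lambda$; and $\epsilon_\ell\circ((r\circ s)\times 1) = \mathsf{L}^{[A,B]}[\epsilon]\circ(s\times 1) = \epsilon_\ell$ (using that $\epsilon_\ell$ is linear in its second argument, so equals its own linearization), whence $r\circ s = 1$ by uniqueness of $\lambda_\ell$ applied to $\epsilon_\ell$.

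Conversely, given linear $r, s$ with $s\circ r = \ell$ and $r\circ s = 1$, I would define the evaluation map $\epsilon_\ell := \epsilon\circ(s\times 1_A): \mathcal{L}(A,B)\times A \to B$ and the candidate linear curry $\lambda_\ell(f) := r\circ\lambda(f)$. The universal property again follows from the defining equations plus uniqueness of $\lambda$: for $f : X \times A \to B$ linear in its second argument, $\epsilon_\ell\circ(\lambda_\ell(f)\times 1) = \epsilon\circ((s\circ r)\times 1)\circ(\lambda(f)\times 1) = \mathsf{L}^{[A,B]}[\epsilon]\circ(\lambda(f)\times 1) = \mathsf{L}^X[f] = f$, the last step because a map linear in its second argument is its own linearization; and if $h$ also satisfies $\epsilon_\ell\circ(h\times 1) = f$, then $\epsilon\circ((s\circ h)\times 1)=f$ forces $s\circ h = \lambda(f)$, so $h = r\circ s\circ h = r\circ\lambda(f) = \lambda_\ell(f)$.

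The step I expect to be the genuine obstacle, and the only one that truly uses the hypothesis, is verifying that $\epsilon_\ell$ is bilinear in the reverse direction, specifically that it is linear in its second argument. This is exactly where it matters that it is the linearization idempotent $\ell$ that splits rather than an arbitrary linear idempotent: from $r\circ s = 1$ and $s\circ r = \ell$ one obtains $s = \ell\circ s$, hence $\epsilon\circ(s\times 1) = \epsilon\circ(\ell\times 1)\circ(s\times 1) = \mathsf{L}^{[A,B]}[\epsilon]\circ(s\times 1)$, which is linear in its second argument since $\mathsf{L}^{[A,B]}[\epsilon]$ is and precomposition in the first argument leaves the second argument untouched. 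Linearity of $\epsilon_\ell$ in its first argument, together with the remaining linearity claims about $r$ and $s$, are then routine consequences of linearity of $s$, linearity of $\epsilon$ in its first argument, and the compatibility of currying with linearity, all of which I would discharge in the term logic.
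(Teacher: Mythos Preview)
Your proof is correct and follows essentially the same approach as the paper: the same choices of $s = \lambda(\epsilon_\ell)$, $r = \lambda_\ell(\mathsf{L}^{[A,B]}[\epsilon])$ in the forward direction, and $\lambda_\ell(f) = r\circ\lambda(f)$ with $\epsilon_\ell$ built from $s$ in the converse direction. The only cosmetic difference is that the paper defines $\epsilon_\ell := \mathsf{L}^{[A,B]}[\epsilon]\circ(s\times 1_A)$ directly, whereas you define $\epsilon_\ell := \epsilon\circ(s\times 1_A)$ and then use $s = \ell\circ s$ to identify the two; your version also spells out the uniqueness half of the universal property, which the paper's sketch omits.
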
 
\begin{proof} (Sketch) Suppose that our Cartesian closed differential category is also linearly closed. As explained above, $\ell$ is already idempotent and linear. So it remains to construct a linear splitting of $\ell$. First consider the curry of the evaluation map for the linear closed structure, $\mathcal{L}(A,B) \xrightarrow{\lambda(\epsilon_\ell)} [A,B]$, which is linear since $\epsilon_\ell$ was linear in its first argument. Next, consider the linear curry of the partial linearization of the evaluation map for the Cartesian closed structure, $[A,B] \xrightarrow{\lambda_\ell \left( \mathsf{L}^{[A,B]}[\epsilon] \right)} \mathcal{L}(A,B)$, which is also linear since $\mathsf{L}^{[A,B]}[\epsilon]$ is linear in its first argument. Then it follows that $\lambda_\ell \left( \mathsf{L}^{[A,B]}[\epsilon] \right) \circ \lambda(\epsilon_\ell) = 1_{\mathcal{L}(A,B)}$ and $\lambda(\epsilon_\ell) \circ \lambda_\ell \left( \mathsf{L}^{[A,B]}[\epsilon] \right) = \ell$. So $\ell$ is a linear split idempotent. Conversely, suppose that $[A,B] \xrightarrow{\ell} [A,B]$ is a linear split idempotent via $[A,B] \xrightarrow{r} \mathcal{L}(A,B)$ and $\mathcal{L}(A,B) \xrightarrow{s} [A,B]$. Then the internal linear hom is given by the object of the splitting $\mathcal{L}(A,B)$, while the evaluation map is defined as the composite $\epsilon_\ell := \mathcal{L}(A,B) \times A \xrightarrow{s \times 1_A} [A,B] \times A \xrightarrow{\mathsf{L}^{[A,B]}[\epsilon]} B$, which is bilinear since $\mathsf{L}^{[A,B]}[\epsilon]$ is bilinear and $s$ is linear. The linear curry of a map $A \times B \xrightarrow{f} C$ which is linear in its second argument is defined as the composite $\lambda_\ell(f) := A \xrightarrow{\lambda(f)} [A,B] \xrightarrow{r} \mathcal{L}(A,B)$. So our Cartesian closed differential category is linearly closed. 
\end{proof} 

In a Cartesian closed differential category which is also linearly closed, for a map $A \times B \xrightarrow{f} C$ which is linear in its second argument, its curry $A \xrightarrow{\lambda(f)} [B,C]$ factors through its linear curry $A \xrightarrow{\lambda_\ell(f)} \mathcal{L}(B,C)$ in the sense that $\lambda(f) = A \xrightarrow{\lambda_\ell(f)} \mathcal{L}(B,C) \xrightarrow{s} [A,B]$. It is also worth pointing out that while it is true that $[A,B] \times A \xrightarrow{\mathsf{L}^{[A,B]}\left[ \epsilon \right] } B$ is bilinear and that if $A \times B \xrightarrow{f} C$ is linear in its second argument then we also have that $f = \mathsf{L}^{[B,C]}\left[ \epsilon \right] \circ (\lambda(f) \times 1_B)$, this does not give a linearly closed structure since $\lambda(f)$ may fail the uniqueness requirement (i.e.\ $\mathsf{L}^{[A,B]}\left[ \epsilon \right]$ is not monic in its first argument). For example, we have that $\mathsf{L}^{[A,B]}\left[ \epsilon \right] = \mathsf{L}^{[A,B]}\left[ \epsilon \right] \circ (\ell \times 1_A)$ but $\ell \neq 1_{[A,B]}$. 

Given any Cartesian differential category, we may build its linear idempotent completion in the obvious way. For a Cartesian differential category $\mathbb{X}$, let $\mathsf{LS}[\mathbb{X}]$ be the category whose objects are pairs $(A, A \xrightarrow{e} A)$ consisting of an object $A$ and a linear idempotent $e$ of $\mathbb{X}$ and whose maps $(A,e) \xrightarrow{f} (B, e^\prime)$ are maps $A \xrightarrow{f} B$ $\mathbb{X}$ such that $f \circ e = e^\prime \circ f$. By \cite[Proposition 4.7, Corollary 4.1]{cockett2020linearizing}, $\mathsf{LS}[\mathbb{X}]$ is a Cartesian differential category where the differential combinator is defined as in $\mathbb{X}$, and $\mathsf{LS}[\mathbb{X}]$ is also linear idempotent complete. Furthermore, if $\mathbb{X}$ was a Cartesian closed differential category, then $\mathsf{LS}[\mathbb{X}]$ is also a Cartesian closed differential category. Therefore, putting all of this together, we obtain the following: 

\begin{proposition} If $\mathbb{X}$ is a Cartesian closed differential category, then $\mathsf{LS}[\mathbb{X}]$ is a linearly closed Cartesian closed differential category. 
\end{proposition}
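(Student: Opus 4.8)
The plan is to assemble the result from the three ingredients already established in this section, so the proof is essentially a bookkeeping argument rather than a fresh computation. The two facts I would cite directly are: first, the immediately preceding discussion, which (via \cite[Proposition 4.7, Corollary 4.1]{cockett2020linearizing}) establishes that $\mathsf{LS}[\mathbb{X}]$ is a Cartesian differential category that is \emph{linear idempotent complete}, and is moreover a Cartesian closed differential category whenever $\mathbb{X}$ is; and second, the previous Proposition, which characterizes when a Cartesian closed differential category is linearly closed, namely precisely when the canonical idempotent $[A,B] \xrightarrow{\ell} [A,B]$ on the internal hom is a linear split idempotent.

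First I would observe that, since $\mathsf{LS}[\mathbb{X}]$ is itself a Cartesian closed differential category, the idempotent $\ell$ may be formed on each internal hom $[A,B]$ of $\mathsf{LS}[\mathbb{X}]$ exactly as defined above, i.e.\ as the linear curry of the partial linearization of the evaluation map. The general argument given just before the previous Proposition shows that this $\ell$ is a linear idempotent in \emph{any} Cartesian closed differential category; in particular, $\ell$ is a linear idempotent in $\mathsf{LS}[\mathbb{X}]$. Next I would invoke linear idempotent completeness: in $\mathsf{LS}[\mathbb{X}]$ every linear idempotent is a linear split idempotent. Applying this to $\ell$ shows that $\ell$ splits linearly in $\mathsf{LS}[\mathbb{X}]$, which is exactly the hypothesis of the previous Proposition. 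Hence $\mathsf{LS}[\mathbb{X}]$ is linearly closed, and since it is also Cartesian closed differential, the claim follows.

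The main obstacle, such as it is, is conceptual rather than technical: one must take care that the internal-hom structure and the idempotent $\ell$ appearing in the previous Proposition are genuinely those of $\mathsf{LS}[\mathbb{X}]$, rather than structure naively inherited from $\mathbb{X}$. Once the cited results guarantee that $\mathsf{LS}[\mathbb{X}]$ is a bona fide Cartesian closed differential category (so that both its internal homs $[A,B]$ and the induced idempotents $\ell$ are computed internally to $\mathsf{LS}[\mathbb{X}]$), this subtlety dissolves and the result follows formally from the splitting of $\ell$ together with the characterization in the previous Proposition.
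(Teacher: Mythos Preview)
Your proposal is correct and follows exactly the argument the paper intends: the paper does not give a separate proof for this proposition but simply says ``putting all of this together,'' referring to the preceding discussion that $\mathsf{LS}[\mathbb{X}]$ is a linear idempotent complete Cartesian closed differential category, together with the previous proposition characterizing linear closedness via the splitting of $\ell$. Your write-up makes this implicit argument explicit and even flags the one point that requires care (that $\ell$ is computed internally to $\mathsf{LS}[\mathbb{X}]$), so there is nothing to add.
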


\section{Cartesian Reverse Differential Categories and Gradients}

In most of the standard conventions of calculus, the gradient of a scalar-valued smooth function is the transpose of its Jacobian. In this section, we explain how Cartesian reverse differential categories are the appropriate setting in which to define gradients and transposes. We briefly review Cartesian reverse differential categories and invite readers to see the full story in the original paper \cite{cockettetal:LIPIcs:2020:11661}. 

\begin{definition}\label{cartdiffdef} A \textbf{Cartesian reverse differential category} is a Cartesian left additive category $\mathbb{X}$ equipped with a \textbf{reverse differential combinator} $\mathsf{R}$, which is a family of operators $\mathbb{X}(A,B) \xrightarrow{\mathsf{R}} \mathbb{X}(A \times B,A)$, where for a map $A \xrightarrow{f} B$, the resulting map $A \times B \xrightarrow{\mathsf{R}[f]} A$ is called the reverse derivative of $f$, and such that the seven axioms found in \cite[Definition 13]{cockettetal:LIPIcs:2020:11661} hold. 
\end{definition}

\begin{example} \normalfont $\mathsf{SMOOTH}$ is a Cartesian reverse differential category where for a smooth function of type $\mathbb{R}^n \xrightarrow{F} \mathbb{R}^m$, $F = \langle f_1, \hdots, f_m \rangle$, its reverse derivative $\mathbb{R}^n \times \mathbb{R}^m \xrightarrow{\mathsf{R}[F]} \mathbb{R}^n$ is defined by multiplying the transpose of the Jacobian matrix of $F$ at the first argument $\vec x$, which is an $n \times m$ matrix $\mathbf{J}(F)(\vec x)^\mathsf{T}$, with the second argument $\vec y$, viewed as an $m \times 1$ matrix:
\[{\small \mathsf{R}[F](\vec x, \vec y) := \mathbf{J}(F)(\vec x)^\mathsf{T}\vec y = \begin{bmatrix} \frac{\partial f_1}{\partial x_1}(\vec x) & \frac{\partial f_2}{\partial x_1}(\vec x) & \hdots & \frac{\partial f_m}{\partial x_1}(\vec x) \\
 \frac{\partial f_1}{\partial x_2}(\vec x) & \frac{\partial f_2}{\partial x_2}(\vec x) & \hdots & \frac{\partial f_m}{\partial x_2}(\vec x) \\
 \vdots & \vdots & \vdots & \vdots \\
 \frac{\partial f_1}{\partial x_n}(\vec x) & \frac{\partial f_2}{\partial x_n}(\vec x) & \hdots & \frac{\partial f_m}{\partial x_n}(\vec x) 
\end{bmatrix} \begin{bmatrix} y_1 \\ y_2 \\ \vdots \\ y_m
\end{bmatrix} = \begin{bmatrix} \sum \limits^m_{j=1} \frac{\partial f_j}{\partial x_1}(\vec x) y_j \\ \vdots \\ \sum \limits^m_{j=1} \frac{\partial f_j}{\partial x_n}(\vec x) y_j
\end{bmatrix} } \]
When $m=1$, for a smooth function $\mathbb{R}^n \xrightarrow{f} \mathbb{R}$, recall that its gradient at $\vec x \in \mathbb{R}^n$ is defined as $\nabla(f)(\vec x) = \mathbf{J}(F)(\vec x)^\mathsf{T} = [\frac{\partial f}{\partial x_1}(\vec x), \hdots, \frac{\partial f_1}{\partial x_n}(\vec x) ]^\mathsf{T}$. So its reverse derivative $\mathbb{R}^n \times \mathbb{R} \xrightarrow{\mathsf{R}[f]} \mathbb{R}^n$ is given by multiplying its gradient at $\vec x$ with the scalar in the second argument, $\mathsf{R}[f](\vec x, y) = \nabla(f)(\vec x) y$. \end{example}

Every Cartesian reverse differential category is also a Cartesian differential category \cite[Theorem 16]{cockettetal:LIPIcs:2020:11661}, where the differential combinator $\mathsf{D}$ is induced by the reverse differential combinator $\mathsf{R}$. Explicitly, for a map $A \xrightarrow{f} B$, note that its second order reverse derivative is of type $(A \times B) \times A \xrightarrow{\mathsf{R}\left[ \mathsf{R}[f] \right]} A \times B$. Then the derivative of $f$ is defined as the composite $\mathsf{D}[f] := A \times A \xrightarrow{\langle 1_A, 0 \rangle \times 1_A} (A \times B) \times A \xrightarrow{\mathsf{R}\left[ \mathsf{R}[f] \right]} A \times B \xrightarrow{\pi_1} B$. Conversely, every Cartesian differential category with a \textbf{contextual linear dagger} \cite[Definition 39]{cockettetal:LIPIcs:2020:11661} is a Cartesian reverse differential category \cite[Theorem 41]{cockettetal:LIPIcs:2020:11661}. Very briefly, a contextual linear dagger amounts to an operator $(-)^{\dagger[-]}$ which sends every map $A \times B \xrightarrow{f} C$ which is linear in its second argument to a map $A \times C \xrightarrow{f^{\dagger[A]}} B$ which is again linear in its second argument, and such that $(-)^{\dagger[A]}$ makes the category of maps which are linear in context $A$ into a $\dagger$-category with finite $\dagger$-biproducts. Then the reverse derivative of a map $A \xrightarrow{f} B$ is defined by the taking the dagger in context of its derivative, that is, $\mathsf{R}[f] : = A \times B \xrightarrow{\mathsf{D}[f]^{\dagger[A]}} B$. Furthermore, these constructions are inverses of each other \cite[Theorem 42]{cockettetal:LIPIcs:2020:11661}. 

In a Cartesian reverse differential category, for any map $f$, its reverse derivative $\mathsf{R}[f]$ is linear in its second argument. Therefore we define the gradient of $f$ as the linear curry of its reverse derivative: 

\begin{definition} A \textbf{linearly closed Cartesian reverse differential category} is a Cartesian reverse differential category whose induced Cartesian differential structure is linearly closed. In a linearly closed Cartesian reverse differential category, the \textbf{gradient} of a map $A \xrightarrow{f} B$ is the map $A \xrightarrow{\nabla(f)} \mathcal{L}(B,A)$ defined as the linear curry of $A \times B \xrightarrow{\mathsf{R}[f]} A$, that is, $\nabla(f) : = \lambda_\ell( \mathsf{R}[f] )$. 
\end{definition}

Let us explain why this is the correct definition of the gradient by proving that it is the transpose of the Jacobian. To do so, we first need the notion of a transpose operator:
\begin{definition} A linearly closed Cartesian differential category has a \textbf{linear transpose} if for every pair of objects $A$ and $B$, there is a linear map $\mathcal{L}(A,B) \xrightarrow{\tau} \mathcal{L}(B,A)$ such that (i) $\tau \circ \tau = 1$, (ii) $\tau \circ p_{1_A} = p_{1_A}$ and $\tau \circ p_{\pi_j} = \tau \circ p_{\iota_j}$, and (iii) $\tau \circ \odot = \odot \circ \langle \tau \circ \pi_1, \tau \circ \pi_0 \rangle$. 
\end{definition}

As we will see below, it turns out that in the linearly closed setting, having a linear transpose is equivalent to having a contextual linear dagger (and so also to having a reverse differential combinator). Therefore, the linear transpose axioms are analogues of the fact that in a Cartesian reverse differential category, $\mathsf{Lin}[\mathbb{X}]$ is a $\dagger$-category with $\dagger$-biproducts \cite[Proposition 24]{cockettetal:LIPIcs:2020:11661}, and so $\tau(f) = f^\dagger$. Explicitly, (i) captures the fact $\dagger$ is an involution ${f^\dagger}^\dagger = f$, (ii) captures the fact that $\dagger$ maps identities to identities, $1^\dagger = 1$, and projections to injections, $\pi_j^\dagger = \iota_j$, while (iii) tells us that $\dagger$ is contravariant, $(g \circ f)^\dagger = f^\dagger \circ g^\dagger$. In $\mathsf{SMOOTH}$, if ${\mathbb{R}^n \xrightarrow{F} \mathbb{R}^m}$ is an $\mathbb{R}$-linear map with associated $n\times m$ matrix $A$, then $F^\dagger$ is the $\mathbb{R}$-linear map given by the transpose matrix $A^\mathsf{T}$, and so $\tau$ does indeed plays the role of the matrix transpose operation.

\begin{proposition} A linearly closed Cartesian reverse differential category is precisely a linearly closed Cartesian differential category equipped with a linear transpose. Furthermore: 
\begin{multicols}{2}
\begin{enumerate}[{\em (i)}]
\item $f^{\dagger[A]} = \varepsilon_\ell^{\dagger[\mathcal{L}(A,B)]} \circ (\lambda_\ell(f) \times 1)$.
\item $\lambda_\ell( f^{\dagger[A]} ) = \tau \circ \lambda_\ell(f)$. 
\item $\nabla(f) = \tau \circ \mathbf{J}(f)$.
\item $\nabla(f+g) = \nabla(f) + \nabla(g)$ and $\nabla(0) =0$.
\item If $f$ is linear then $\nabla(f) = p_{f^\dagger} \circ 0$. 
\item $\nabla(g \circ f) = \odot \circ \langle \nabla(f), \nabla(g) \circ f \rangle$. 
\end{enumerate}
\end{multicols}
\end{proposition}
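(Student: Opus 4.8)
The plan is to prove the equivalence first and then read off the six identities, most of which are formal consequences of it. The backbone is the bijection, supplied by the linearly closed structure, between maps $A \times B \to C$ that are linear in their second argument and maps $A \to \mathcal{L}(B,C)$, implemented by linear curry $\lambda_\ell$. Under this bijection the category $\mathsf{Lin}_A[\mathbb{X}]$ of maps linear in context $A$ (the one that a contextual linear dagger turns into a $\dagger$-category with $\dagger$-biproducts) is re-expressed so that its identities become $p_{1} \circ 0$, its composition becomes $\odot$, and its biproduct projections and injections become $p_{\pi_j}\circ 0$ and $p_{\iota_j}\circ 0$. I would record this dictionary first, since it translates each piece of $\dagger$-structure into an equation about $\mathcal{L}$, $\odot$, $p_{(-)}$, and $\varepsilon_\ell$.

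For the direction \emph{contextual dagger $\Rightarrow$ linear transpose}, I define $\tau := \lambda_\ell\!\left(\varepsilon_\ell^{\dagger[\mathcal{L}(A,B)]}\right)$; here $\varepsilon_\ell$ is bilinear, so its dagger in context $\mathcal{L}(A,B)$ is defined and is again bilinear, whence $\tau$ is linear. I then verify the three transpose axioms by pushing the $\dagger$-category axioms (involution; preservation of identities and of the (co)projections; contravariance) through the dictionary, using the substitution/naturality axiom of the contextual dagger, $\left(g \circ (h \times 1)\right)^{\dagger[A']} = g^{\dagger[A]} \circ (h \times 1)$ for $h : A' \to A$, to reduce statements in an arbitrary context to the single generic map $\varepsilon_\ell$. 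For the converse, \emph{linear transpose $\Rightarrow$ contextual dagger}, given $\tau$ I define $f^{\dagger[A]}$ as the unique map with $\lambda_\ell\!\left(f^{\dagger[A]}\right) = \tau \circ \lambda_\ell(f)$, equivalently the uncurry $\varepsilon_\ell \circ \left((\tau \circ \lambda_\ell(f)) \times 1\right)$, which is linear in its second argument because $\varepsilon_\ell$ is. Checking that this is a contextual linear dagger and yields the $\dagger$-biproduct structure amounts, through the dictionary, to the transpose axioms (i)--(iii), with additivity of $\tau$ handling the biproduct conditions. Uniqueness of linear curry shows the two constructions are mutually inverse.

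The six identities then follow. Identity (i) is the substitution axiom applied to the factorization $f = \varepsilon_\ell \circ (\lambda_\ell(f) \times 1)$; identity (ii) is obtained from (i) by taking linear curry and using $\lambda_\ell\!\left(g \circ (h \times 1)\right) = \lambda_\ell(g) \circ h$ with $\tau = \lambda_\ell(\varepsilon_\ell^{\dagger})$. Identity (iii) is then immediate from (ii) together with $\mathsf{R}[f] = \mathsf{D}[f]^{\dagger[A]}$, $\mathbf{J}(f) = \lambda_\ell(\mathsf{D}[f])$, and $\nabla(f) = \lambda_\ell(\mathsf{R}[f])$. The last three are formal consequences of (iii): (iv) from additivity of the Jacobian (Proposition \ref{propjacob}.(i)) and additivity of the linear map $\tau$; (v) from $\mathbf{J}(f) = p_f \circ 0$ for linear $f$ (Proposition \ref{propjacob}.(ii)) and $\tau \circ p_f = p_{f^\dagger}$, which is (ii) specialized to context $\top$; and (vi) from $\mathbf{J}(g \circ f) = \odot \circ \langle \mathbf{J}(g)\circ f, \mathbf{J}(f)\rangle$ (Proposition \ref{propjacob}.(iii)) followed by the contravariance axiom $\tau \circ \odot = \odot \circ \langle \tau \circ \pi_1, \tau \circ \pi_0\rangle$ and a reshuffle of the pairing.

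I expect the genuinely new work, and the main obstacle, to lie entirely in the equivalence: verifying that the constructed dagger satisfies the \emph{full} contextual linear dagger definition (in particular the $\dagger$-biproduct conditions) and, dually, that the constructed $\tau$ satisfies all three axioms. The crux is the naturality/substitution compatibility, since it is exactly what permits a context-dependent dagger to be compressed into the single context-free operator $\tau$, and conversely guarantees that post-composing by $\tau$ reconstitutes a bona fide contextual dagger; everything else is bookkeeping through the dictionary.
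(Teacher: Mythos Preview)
Your proposal is correct and follows essentially the same route as the paper: define $\tau$ as $\lambda_\ell\!\left(\varepsilon_\ell^{\dagger[\mathcal{L}(A,B)]}\right)$ in one direction and $f^{\dagger[A]} := \varepsilon_\ell \circ ((\tau \circ \lambda_\ell(f)) \times 1)$ in the other, then derive (i) from the context-substitution property of the contextual dagger, (ii) by currying (i), (iii) from (ii) and $\mathsf{R}[f]=\mathsf{D}[f]^{\dagger[A]}$, and (iv)--(vi) from (iii) together with the Jacobian identities and the transpose axioms. Your explicit ``dictionary'' framing and emphasis on the substitution axiom as the crux are a slightly more detailed articulation of exactly what the paper's sketch invokes.
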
 
\begin{proof} (Sketch) Starting with a linearly closed Cartesian reverse differential category, first consider the contextual linear dagger of the evaluation map $\mathcal{L}(A,B) \times B \xrightarrow{\varepsilon_\ell^{\dagger[\mathcal{L}(A,B)]} } A$, and define the transpose map as its linear curry $\mathcal{L}(A,B) \xrightarrow{\tau = \lambda_\ell\left( \varepsilon_\ell^{\dagger[\mathcal{L}(A,B)]} \right)} \mathcal{L}(B,A)$. Then $\tau$ is a linear transpose since, as explained above, the linear transpose axioms will follow from the fact that $\mathsf{Lin}[\mathbb{X}]$ is a $\dagger$-category with $\dagger$-biproducts. Conversely, starting with a linearly closed Cartesian differential category with a linear transpose $\tau$, it suffices to construct a contextual linear dagger. For a map $A \times B \xrightarrow{f} C$ which is linear in its second argument, we define its dagger as $f^{\dagger[A]} := A \times C \xrightarrow{\lambda(f) \times 1_C} \mathcal{L}(B,C) \times C \xrightarrow{\tau \times 1_C} \mathcal{L}(C,B) \times C \xrightarrow{\varepsilon_\ell} B$. So we obtain a linearly closed Cartesian reverse differential category. Next, (i) follows from the fact that the contextual linear dagger preserves modification to the context, and so: $f^{\dagger[A]} = \left( \varepsilon_\ell \circ (\lambda_\ell(f) \times 1) \right)^{\dagger[A]} = \varepsilon_\ell^{\dagger[\mathcal{L}(A,B)]} \circ (\lambda_\ell(f) \times 1)$. Then (ii) is simply the linear curry of (i). For (iii), using (ii) and $\mathsf{R}[f] = \mathsf{D}[f]^{\dagger[A]}$, we compute that $\nabla(f) = \lambda_\ell( \mathsf{R}[f] ) = \lambda_\ell( \mathsf{D}[f]^{\dagger[A]} ) = \tau \circ \lambda_\ell( \mathsf{D}[f] ) = \tau \circ \mathbf{J}(f)$. The remaining three identities are simply the transpose versions of the three identities from Proposition \ref{propjacob} and the linear transpose axioms (and that if $A \xrightarrow{f} B$ is linear then $\tau \circ p_f = p_{f^\dagger}$, which is a special case of (ii)). 
\end{proof} 

We conjecture that an equivalent alternative axiomatization of a linearly closed Cartesian reverse differential category can be given in terms of a Cartesian left additive category equipped with $\mathcal{L}(-,-)$, $\varepsilon_\ell$, $\nabla$, and $\tau$, where one would define the reverse differential combinator as $\mathsf{R}[-] = \varepsilon_\ell \circ (\nabla(-) \times 1)$. In other future work, it should also be possible to generalize other important notions from classical differential calculus such as the divergence, the curl, the Laplacian, and the Hessian. In fact, recall that the Hessian matrix is defined as the Jacobian matrix of the gradient. Therefore, the Hessian can be defined in a linearly closed Cartesian reverse differential category as $A \xrightarrow{\mathbf{H}(f) := \mathbf{J}(\nabla(f))} \mathcal{L}\left( A, \mathcal{L}(B,A) \right)$. 

\nocite{*}
\bibliographystyle{eptcs}
\bibliography{actbib}
\end{document}